\newcommand{\tn}[1]{\textnormal{#1}}        
\newtheorem{definition}{Definition}
\newtheorem{proposition}{Proposition}
\newtheorem{problem}{Problem}
\title{\LARGE \bf
Proofs for an Abstraction of Continuous Dynamical Systems Utilizing Lyapunov Functions
}
\author{Christoffer Sloth and Rafael Wisniewski
\thanks{This work was supported by MT-LAB, a VKR Centre of Excellence.}%
\thanks{Christoffer Sloth is with Department of Computer Science,
        Aalborg University, 9220 Aalborg East, Denmark
        {\tt\small csloth@cs.aau.dk}}%
\thanks{Rafael Wisniewski is with the Section of Automation \& Control, Aalborg University,
        9220 Aalborg East, Denmark
        {\tt\small raf@es.aau.dk}}%
}
\begin{document}

\maketitle
\thispagestyle{empty}
\pagestyle{empty}

\begin{abstract}
In this report proofs are presented for a method for abstracting continuous dynamical systems by timed automata. The method is based on partitioning the state space of dynamical systems with invariant sets, which form cells representing locations of the timed automata.

To enable verification of the dynamical system based on the abstraction, conditions for obtaining sound, complete, and refinable abstractions are set up.

It is proposed to partition the state space utilizing sub-level sets of Lyapunov functions, since they are positive invariant sets. The existence of sound abstractions for Morse-Smale systems and complete and refinable abstractions for linear systems are proved.
\end{abstract}

\section{Introduction}\label{sec:introduction}
Verifying properties such as safety is important for any system. Such verification is based on reachability calculations or approximations. Since the exact reachable sets of continuous and hybrid systems in general are incomputable \cite{recent_progress} a lot of attention has been paid to their approximations. Yet reachability is decidable for discrete systems such as automata and timed automata; consequently, there exists a rich set of tools aimed at verifying properties of such systems. Therefore, abstracting dynamical systems by discrete systems would enable verification of dynamical systems using these tools.

There are basically two methods for verifying continuous and hybrid systems. The first is to over-approximate the reachable states by simple convex sets as in \cite{ellipsoidal_book}. The second method is based on abstracting the original system into a description with reduced complexity, while preserving certain properties of the original systems. This is accomplished for hybrid systems in \cite{abstractions_for_hybrid_systems} and for continuous systems in \cite{oded}.

In this work, continuous systems are abstracted by timed automata. This concept is primarily motivated by \cite{oded} where slices are introduced to improve abstractions of continuous systems. A slice is a counterpart of a single direction in continuous systems.

This technical report is devoted to proving the propositions presented in the paper "Abstraction of Continuous Dynamical Systems Utilizing Lyapunov Functions", written by Christoffer Sloth and Rafael Wisniewski for the 49$^{\tn{th}}$ IEEE Conference on Decision and Control (CDC) \cite{CDC2010}. Therefore, that paper can be consulted for further insight in the abstraction method.
In that paper the idea of considering both cells and slices for abstractions was adopted to provide as solution to the following problem.

\begin{problem}\label{problem:intro}
Given an autonomous dynamical system, find a partition of its state space, which allows arbitrary close over-approximation of the reachable set by a timed automaton.
\end{problem}

The abstraction to be addressed preserves safety and has an upper bound on the size of the over-approximation of the reachable set. Furthermore, it is possible to reduce the size of the upper bound to an arbitrary small value, for a class of systems, by refining the partitioning. Hence, we can obtain an abstraction with arbitrary precision of the reachable set.

\section{Preliminaries}\label{sec:preliminaries}
The purpose of this section is to provide some definitions related to autonomous dynamical systems and timed automata.

An autonomous dynamical system $\Gamma=(X,f)$ is a system with state space $X\subseteq\mathds{R}^{n}$ and dynamics described by ordinary differential equations $f:X\rightarrow\mathds{R}^{n}$
\begin{align}
\dot{x}&=f(x).\label{eqn:auto_differential_equation}
\end{align}
The function $f$ is assumed to be locally Lipschitz. Additionally, we assume linear growth of $f$, then according to Theorem~1.1 in \cite{nonsmooth_analysis_and_control_theory} there exists a solution of \eqref{eqn:auto_differential_equation} on $(-\infty,\infty)$.
%

The solution of \eqref{eqn:auto_differential_equation}, from an initial state $x_{0}\in X$ at time $t\geq0$ is described by the flow function $\phi_{\Gamma}:[0,\epsilon]\times X\rightarrow X$, $\epsilon>0$ satisfying
\begin{align}
\frac{d\phi_{\Gamma}(t,x_{0})}{dt}&=f\left(\phi_{\Gamma}(t,x_{0})\right)\label{eqn:solution_of_auto_differential_equation}
\end{align}
for all $t\geq0$.

Lyapunov functions are utilized in stability theory and are defined in the following \cite{Energy_Functions_for_Morse_Smale_Systems}.
\begin{definition}[Lyapunov Function]\label{def:lyapunov_function}
Assume that a mapping $f:\mathds{R}^{n}\rightarrow\mathds{R}^{n}$ is continuous on $G\subset\mathds{R}^{n}$ and that $G$ is open and connected. Then a real non-degenerate function $\psi:\mathds{R}^{n}\rightarrow\mathds{R}\cup\{-\infty,\infty\}$ differentiable on $G$ is said to be a Lyapunov function for the differential equation shown in \eqref{eqn:auto_differential_equation} if
\begin{align}
\notag&p\tn{ is a critical point of }f\Leftrightarrow p\tn{ is a critical point of }\psi\\
&\dot{\psi}(x)=\sum_{j=1}^{n}\frac{\partial \psi}{\partial x_{j}}(x)f^{j}(x)\label{eqn:Lyap_der}\\
&\dot{\psi}(x):
\begin{cases}
=0\indent\tn{if }x=p\\
<0\indent\tn{if }x\in G\backslash\{p\}
\end{cases}
\end{align}
and $\exists\tn{ }\alpha>0$ and an open neighborhood of each critical point $p$, where
\begin{align}
&||\dot{\psi}(x)||\geq\alpha||x-p||.
\end{align}
\end{definition}
Note that we do not require positive definiteness of $\psi$.

\begin{definition}[Reachability for Dynamical System]
The reachable set of a dynamical system $\Gamma$ from a set of initial states $X_{0}\subseteq X$ on the time interval $[t_{1},t_{2}]$ is defined as
\begin{align}
\notag\tn{Reach}_{[t_{1},t_{2}]}(\Gamma,X_{0})=\{&x\in X|\exists t\in[t_{1},t_{2}]\tn{, }\exists x_{0}\in X_{0},\\
&\tn{such that }x =\phi_{\Gamma}(t,x_{0})\}.
\end{align}
\end{definition}

The dynamical system will be abstracted by a timed automaton. Therefore, a definition of timed automaton is provided in the following \cite{A_theory_of_timed_automata}.
In the definition, a set of clock constraints $\Psi(C)$ for the set $C$ of clocks is utilized. $\Psi(C)$ contains all invariants and guards of the timed automaton, consequently it is described by the following grammar \cite{Quantitative_Analysis_of_Weighted_Transition_Systems}:
\begin{align}
&\psi::=c_{1}\bowtie k|c_{1}-c_{2}\bowtie k|\psi_{1}\wedge\psi_{2}\tn{, where}\label{eqn:clock_grammar}\\
&\notag c_{1},c_{2}\in C,\, k\in\mathds{Z},\tn{ and }\bowtie\in\{\leq,<,=,>,\geq\}.
\end{align}
Note that the clock constraint $k$ should be an integer, but in this paper no effort is done in converting the clock constraints into integers.
\begin{definition}[Timed Automaton]
A timed automaton, $\mathcal{A}$, is a tuple $(L, L_{0}, C, \Sigma, I, \Delta)$, where
\begin{itemize}
\item $L$ is a finite set of locations, and $L_{0}\subseteq L$ is the set of initial locations.
\item $C$ is a finite set of clocks all with values in $\mathds{R}_{\geq0}$.
\item $\Sigma$ is the input alphabet.
\item $I:L\rightarrow\Psi(C)$ assigns invariants to locations, where $\Psi(C)$ is the set of all clock constraints, see \eqref{eqn:clock_grammar}.
\item $\Delta\subseteq L\times\Psi(C)\times\Sigma\times2^{C}\times L$ is a finite set of transition relations. The transition relations provide edges between locations as tuples $(l,G_{l\rightarrow l'},\sigma,R_{l\rightarrow l'},l')$, where $l$ is the source location, $l'$ is the destination location, $G_{l\rightarrow l'}\in\Psi(C)$ is the guard set, $\sigma$ is a symbol in the alphabet $\Sigma$, and $R_{l\rightarrow l'}\in2^{C}$ gives the set of clocks to be reset.
\end{itemize}
\end{definition}
We use the mapping $v:C\rightarrow\mathds{R}_{\geq0}$ for a clock valuation on a set of clocks $C$. Additionally, the initial valuation is denoted $v_{0}$, where $v_{0}(c)=0$ for all $c\in C$.

Analog to the solution of \eqref{eqn:auto_differential_equation} shown in \eqref{eqn:solution_of_auto_differential_equation}, a run of a timed automaton is defined in the following.
\begin{definition}[Run of Timed Automaton]\label{def:run_of_timed_automaton}
A run of a timed automaton $\mathcal{A}$ is a possibly infinite sequence of alternations between time steps and discrete steps in the following form
\begin{align}
(v_{0},l_{0})\overset{t_{1}}{\longrightarrow}(v_{0}+t_{1},l_{0})\overset{\sigma_{1}}{\longrightarrow}(v_{1},l_{1})\longrightarrow\dots
\end{align}
The multifunction describing a run of a timed automaton is $\phi_{\mathcal{A}}:\mathds{R}_{\geq0}\times L_{0}\rightarrow 2^{L}$. Here $l\in\phi_{\mathcal{A}}(t,l_{0})$ if and only if the timed automaton $\mathcal{A}$ initialized in $l_{0}$ can be in location $l$ at time $t=\sum_{i} t_{i}$.
\end{definition}

From the run of a timed automaton, the reachable set is defined below.
\begin{definition}[Reachability for Timed Automaton]
The reachable set of a timed automaton $\mathcal{A}$ with initial locations $L_{0}$ on the time interval $[t_{1},t_{2}]$ is defined as
\begin{align}
\notag\tn{Reach}_{[t_{1},t_{2}]}(\mathcal{A},L_{0})=\{&l\in L|\exists t\in[t_{1},t_{2}],\exists l_{0}\in L_{0},\\
&\tn{such that }l \in\phi_{\mathcal{A}}(t,l_{0})\}.
\end{align}
\end{definition}

\section{Generation of Finite Partition}\label{sec:generation_of_finite_partition}
A finite partition of the state space of the considered system is generated using slices, which are set-differences between positive invariant sets.
\begin{proposition}
If $S_{1}\pitchfork S_{2}\neq\emptyset$ then
\begin{align}
&\tn{int}(S_{1}\cap S_{2})\neq\emptyset.
\end{align}
\end{proposition}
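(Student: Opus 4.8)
\textit{Reading of the statement.} The symbol $\pitchfork$ denotes transversal intersection, so I read $S_1\pitchfork S_2\neq\emptyset$ as: $S_1\cap S_2\neq\emptyset$ and at every point of the intersection $S_1$ and $S_2$ meet transversally. Since $S_1,S_2$ are positive invariant sets of full dimension $n$ (they arise as sub-level sets $\{\psi_i\le c_i\}$ of Lyapunov functions), the topological interior in this proposition must be interior in $\mathds{R}^n$, and transversality at a point $p$ amounts to linear independence of the gradients of the \emph{active} defining functions at $p$. The plan is: fix one transversal intersection point $p$, and exhibit in an arbitrarily small neighbourhood of $p$ a non-empty open subset of $S_1\cap S_2$; this gives $\tn{int}(S_1\cap S_2)\neq\emptyset$.

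\textit{Local normal form.} First I would choose a neighbourhood $U$ of $p$ on which each $S_i$ has a clean description $S_i\cap U=\{x\in U\mid g_i(x)\le 0\}$ with $g_i$ smooth: take $g_i=\psi_i-c_i$ when $p\in\partial S_i$, and $g_i\equiv-1$ when $p\in\tn{int}(S_i)$ (the constraint is then inactive). Regularity of the boundary level sets at $p$ — which follows because $p$, being a transversal intersection point, cannot be a critical point of $\psi_i$ — gives $\nabla g_i(p)\neq 0$ for each active $i$, and the paper's transversality at $p$ is exactly the statement that the family $\{\nabla g_i(p):p\in\partial S_i\}$ (a set of at most two covectors) is linearly independent. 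If this family is empty, i.e.\ $p\in\tn{int}(S_1)\cap\tn{int}(S_2)$, there is nothing to prove, since then $p\in\tn{int}(S_1\cap S_2)$.

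\textit{The wedge argument.} The core step is the elementary fact that for linearly independent covectors $a_1,\dots,a_k$ in $\mathds{R}^n$ the linear map $v\mapsto(\langle a_1,v\rangle,\dots,\langle a_k,v\rangle)$ is surjective onto $\mathds{R}^k$; hence there is a vector $v$ with $\langle a_i,v\rangle<0$ for every active $i$, i.e.\ $v$ points strictly into the open wedge $W=\{x\mid g_i(x)<0\text{ for all active }i\}$. A first-order Taylor expansion gives, for small $t>0$, $g_i(p+tv)=g_i(p)+t\langle\nabla g_i(p),v\rangle+o(t)=-t+o(t)<0$, so $p+tv\in W\cap U$ for $t$ sufficiently small. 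Since $W\cap U$ is open in $\mathds{R}^n$ and $W\cap U\subseteq(S_1\cap U)\cap(S_2\cap U)\subseteq S_1\cap S_2$, we conclude $\tn{int}(S_1\cap S_2)\supseteq W\cap U\neq\emptyset$, which finishes the proof.

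\textit{Main obstacle.} I expect the only real difficulty to be the bookkeeping at the boundary: correctly identifying which of $S_1,S_2$ contributes an active constraint at the chosen $p$ and verifying that the paper's notion of transversality for these (possibly boundary-touching) invariant sets coincides precisely with linear independence of the corresponding gradients. Once that dictionary is established, together with the regularity of the level sets at $p$, the remainder is the routine wedge/Taylor estimate above.
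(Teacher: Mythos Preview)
Your argument is correct. Both the paper and you localize at a transversal point $p$ and reduce to the linear model, but the mechanisms differ. The paper picks $p\in\tn{bd}(S_{1})\pitchfork\tn{bd}(S_{2})$ and invokes a straightening theorem (Theorem~7.7 in Bredon, \emph{Topology and Geometry}) to produce a chart $(\Upsilon,U)$ in which each $S_{i}\cap U$ becomes a linear half-space $H_{i}^{+}$; the conclusion then follows from $\dim(H_{1}^{+}\pitchfork H_{2}^{+})=n$. You instead keep the original defining functions $g_{i}=\psi_{i}-c_{i}$, translate transversality into linear independence of the active gradients, and use a direct Taylor estimate to exhibit a point of the open wedge $\{g_{1}<0,\,g_{2}<0\}$ near $p$. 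Your route is more elementary and self-contained (no external chart lemma, and you explicitly dispose of the cases where $p$ lies in the interior of one or both $S_{i}$), while the paper's route is terser and makes the geometric picture---two half-spaces meeting transversally---immediate once the cited theorem is granted. One cosmetic slip: in your Taylor line you wrote $-t+o(t)$, which presumes $\langle\nabla g_{i}(p),v\rangle=-1$; strictly it is $t\langle\nabla g_{i}(p),v\rangle+o(t)<0$.
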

\begin{proof}
Let $p\in \tn{bd}(S_{1})\pitchfork \tn{bd}(S_{2})$ by Theorem~7.7 in \cite{topology_and_geometry} there exists a local coordinate system $(\Upsilon,U)$ such that
\begin{subequations}
\begin{align}
&\Upsilon(S_{1}\cap U)=H_{1}^{+}\subset\mathds{R}^{n}\\
&\Upsilon(S_{2}\cap U)=H_{2}^{+}\subset\mathds{R}^{n}
\end{align}
\end{subequations}
where $H_{1}^{+}$ and $H_{2}^{+}$ are supporting hyperplanes of $S_{1}$ and $S_{2}$. Thus $\tn{dim}(H_{1}^{+}\pitchfork H_{2}^{+})=n$.
\end{proof}

Note that the intersection of slices may form multiple disjoint sets. Therefore, the intersection of $k$ slices is denoted an extended cell $e_{\tn{ex},g}$. Each of the disjoint sets of an extended cell $e_{\tn{ex},g}$ is called a cell $e_{g,h}$.

\section{Generation of Timed Automaton from Finite Partition}\label{sec:obtaining_TA}
A timed automaton is generated by associating each cell of a partition with a location and by inserting guards and invariants calculated based on the dynamics. The method is presented in \cite{CDC2010} and is very similar to the method presented in \cite{oded}.
\begin{proposition}\label{prop:deterministic_TA}
$\mathcal{A}\left(\mathcal{S}\right)$ is a deterministic timed automaton, if and only if for each cell $e_{(g,h)}$ and for all $i=1,\dots,k$ the set
\begin{align}\label{eqn:deterministic_TA}
e_{(g,h)}\bigcap \psi_{i}^{-1}(a_{(i,g_{i}-1)})
\end{align}
is connected.
\end{proposition}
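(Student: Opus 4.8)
The plan is to unfold the definition of a deterministic timed automaton and match it against the explicit construction of $\mathcal{A}(\mathcal{S})$ from \cite{CDC2010}. Recall that $\mathcal{A}(\mathcal{S})$ has one location per cell, that a run leaves the location of a cell $e_{(g,h)}$ exactly when a trajectory of $\Gamma$ crosses one of the bounding level sets $\psi_{i}^{-1}(a_{(i,g_{i}-1)})$ (the only crossings possible, since $\dot{\psi}_{i}<0$ forces $\psi_{i}$ to decrease along trajectories), and that the edge realizing such a crossing carries a symbol determined by the index $i$, a guard that is a lower‑bound constraint on the clock $c_{i}$ associated with $\psi_{i}$, and a reset of $c_{i}$. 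Determinism of $\mathcal{A}(\mathcal{S})$ then amounts to: for every location $e_{(g,h)}$ and every $i$, at most one edge labelled by the symbol of index $i$ is enabled at any reachable clock valuation. Since all such edges carry lower‑bound constraints on the same clock $c_{i}$, which become simultaneously satisfiable once enough time has elapsed in the location while the invariant still holds, the condition collapses to a purely geometric one: the crossing face $F_{i}:=e_{(g,h)}\cap\psi_{i}^{-1}(a_{(i,g_{i}-1)})$ must give rise to a single index‑$i$ transition out of $e_{(g,h)}$.

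For the implication ``$F_{i}$ connected for all $i$ $\Rightarrow$ deterministic'', fix $i$ and a point $p\in F_{i}$. Because $\dot{\psi}_{i}(p)<0$ we have $\nabla\psi_{i}(p)\ne 0$, so $p$ is a regular point of $\psi_{i}$ and the trajectory $t\mapsto\phi_{\Gamma}(t,p)$ enters the interior of a unique cell $e'=\tn{dest}(p)$ for all sufficiently small $t>0$. The assignment $p\mapsto\tn{dest}(p)$ is locally constant: for $p'$ near $p$, continuous dependence on initial conditions puts $\phi_{\Gamma}(t,p')$ near $\phi_{\Gamma}(t,p)\in\tn{int}(e')$, hence still in $\tn{int}(e')$. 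Connectedness of $F_{i}$ then forces $\tn{dest}$ to be constant on $F_{i}$, so every index‑$i$ crossing of $e_{(g,h)}$ enters the same destination cell, and the construction emits exactly one index‑$i$ edge out of $e_{(g,h)}$.

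For the converse I argue by contraposition: suppose $F_{i}=F'\sqcup F''$ with $F',F''$ nonempty and separated. The extended cell lying immediately on the far side of $\psi_{i}^{-1}(a_{(i,g_{i}-1)})$ is partitioned into cells as its connected components, so $F'$ and $F''$ are adjacent to distinct cells; and even in the degenerate situation where they abut a common cell, the flow‑time bounds entering the guards are computed componentwise and differ for $F'$ and $F''$. Either way the construction produces two distinct index‑$i$ edges out of $e_{(g,h)}$, whose guards are both satisfied once the clock $c_{i}$ exceeds the larger of the two thresholds while the location invariant still holds; this valuation is reachable, so two transitions are simultaneously enabled and $\mathcal{A}(\mathcal{S})$ is not deterministic.

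The step I expect to be the main obstacle is the converse: making rigorous that two separated pieces of the crossing face genuinely yield two conflicting edges and cannot be merged into one. This is where one must lean on the precise recipe of \cite{CDC2010} for guards and invariants (namely as flow‑time bounds across a single connected piece of a cell) together with the fact that extended cells are split into their connected components — i.e., one must rule out that a disconnected face is ever ``harmless''. The forward direction, by contrast, is essentially the continuity‑plus‑connectedness argument above and should be routine once the construction is recalled.
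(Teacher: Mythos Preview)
Your proposal is correct in spirit but takes a considerably more elaborate route than the paper. The paper's proof is two sentences: it uses the purely \emph{syntactic} notion of determinism for a labeled automaton (at most one edge per source location and input symbol) together with the fact, implicit in the construction of $\mathcal{A}(\mathcal{S})$ from \cite{CDC2010}, that the number of $\sigma_{i}$-labeled edges out of $e_{(g,h)}$ equals the number of connected components of $F_{i}=e_{(g,h)}\cap\psi_{i}^{-1}(a_{(i,g_{i}-1)})$. With that in hand the biconditional is immediate, and indeed the paper only spells out the contrapositive direction (disconnected $\Rightarrow$ multiple $\sigma_{i}$-edges $\Rightarrow$ nondeterministic), leaving the forward direction entirely implicit in the construction.

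You instead work at the \emph{semantic} level (enabled transitions at reachable clock valuations) and, for the forward direction, prove destination-uniqueness of a connected face via continuous dependence on initial conditions. That argument is correct and has independent interest --- it shows that the single emitted edge really has a well-defined target cell --- but it is not needed for determinism in the sense the paper uses. Conversely, the ``degenerate situation'' you flag as the main obstacle (two face components abutting a common destination cell) is a non-issue once one accepts that the construction emits one edge per connected component of the face rather than one per destination; the paper simply appeals to this and sidesteps your guard-threshold analysis entirely. In short: your plan works, but the paper's argument is essentially a one-line appeal to the edge-per-component recipe plus the syntactic definition of determinism.
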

\begin{proof}
If $e_{(g,h)}\bigcap \psi_{i}^{-1}(a_{(i,g_{i}-1)})$ is not connected for some $i$, then $\sigma_{i}$ is the label of multiple outgoing transitions from the location $e_{(g,h)}$, i.e. there exist multiple transitions in $\Delta$, where $e_{(g,h)}$ is the source location and $\sigma_{i}$ is the label. Therefore, the timed automaton $\mathcal{A}\left(\mathcal{S}\right)$ is nondeterministic.
\end{proof}

\begin{proposition}\label{prop:parallel_comp}
Let $\mathcal{A}_{\tn{ex}}(\mathcal{S})$ be a timed automaton, with locations associated to extended cells, and let the slices of $\mathcal{S}$ be generated such that for each pair $S_{(i,g_{i})}$ and $S_{(j,g_{j})}$, with $i,j\in\{1,\dots,k\}$, $g_{i}\in\{1,\dots,|\mathcal{S}_{i}|\}$, $g_{j}\in\{1,\dots,|\mathcal{S}_{j}|\}$, we have
\begin{align}
S_{(i,g_{i})}\pitchfork S_{(j,g_{j})}\neq\emptyset\indent\forall i\neq j.\label{eqn:parallel_comp}
\end{align}
Then $\mathcal{A}_{\tn{ex}}(\mathcal{S})$ is isomorphic to the parallel composition of $k$ timed automata each generated by one slice-family $\mathcal{S}_{i}$.
\end{proposition}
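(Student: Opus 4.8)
The plan is to exhibit an explicit isomorphism between $\mathcal{A}_{\tn{ex}}(\mathcal{S})$ and the parallel composition $\mathcal{A}_{1}\parallel\dots\parallel\mathcal{A}_{k}$, where each $\mathcal{A}_{i}=\mathcal{A}(\mathcal{S}_{i})$ is generated by the single slice-family $\mathcal{S}_{i}$. The natural candidate on locations is the map sending an extended cell $e_{\tn{ex},g}$, with multi-index $g=(g_{1},\dots,g_{k})$, to the tuple of locations $(l_{1},\dots,l_{k})$ where $l_{i}$ is the location of $\mathcal{A}_{i}$ corresponding to the $g_{i}$-th slice of $\mathcal{S}_{i}$. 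By definition an extended cell is the intersection $\bigcap_{i=1}^{k} S_{(i,g_{i})}$ of one slice from each family, so such extended cells are in bijection with the product of the index sets, i.e.\ with the location set $\prod_{i} L_{i}$ of the parallel composition. The first step is to verify this is a well-defined bijection on locations: here is exactly where the transversality hypothesis \eqref{eqn:parallel_comp} is used — by the preceding proposition $S_{(i,g_{i})}\pitchfork S_{(j,g_{j})}\neq\emptyset$ forces $\tn{int}(S_{(i,g_{i})}\cap S_{(j,g_{j})})\neq\emptyset$, and iterating this guarantees every index tuple $g$ yields a genuine (nonempty, full-dimensional) extended cell, so no locations of the product automaton are missing and none are spurious.

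Next I would check that the isomorphism respects the remaining data of the timed automata. For the clocks: $\mathcal{A}_{\tn{ex}}(\mathcal{S})$ carries one clock $c_{i}$ per slice-family, measuring time spent relative to the level sets $\psi_{i}^{-1}(\cdot)$, and the parallel composition inherits exactly the disjoint union $\bigcup_{i} C_{i}$ of the clock sets of the factors, so the clock identification is the obvious one. For invariants: the invariant assigned to $e_{\tn{ex},g}$ is, by the construction recalled in Section~\ref{sec:obtaining_TA}, a conjunction $\bigwedge_{i=1}^{k}\psi_{i}^{(g_{i})}$ of per-family clock constraints, which is precisely the invariant $\bigwedge_{i} I_{i}(l_{i})$ that the parallel composition assigns to $(l_{1},\dots,l_{k})$. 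For transitions: a transition out of $e_{\tn{ex},g}$ corresponds to the flow crossing one level set $\psi_{i}^{-1}(a_{(i,g_{i}-1)})$, i.e.\ it changes a single coordinate $g_{i}$ while leaving the others fixed, resets only the clock $c_{i}$, and carries the label $\sigma_{i}$ private to family $i$; this is exactly the interleaving semantics of parallel composition on non-shared actions, so the transition relations match under the bijection.

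The main obstacle, and the reason the transversality assumption cannot be dropped, is the well-definedness of the location bijection together with the claim that transitions in the product \emph{and only those} appear in $\mathcal{A}_{\tn{ex}}(\mathcal{S})$. Without \eqref{eqn:parallel_comp}, two slices from different families could be tangent or nested, so a formal index tuple $g$ might correspond to an empty set (no location in $\mathcal{A}_{\tn{ex}}(\mathcal{S})$, breaking surjectivity onto $\prod_{i}L_{i}$) or a single extended cell might fail to split the way the product structure predicts; moreover a boundary of $S_{(i,g_{i})}$ could meet a boundary of $S_{(j,g_{j})}$ non-transversally, producing a lower-dimensional intersection that destroys the clean "one clock crosses, others unchanged" picture and hence the interleaving correspondence on $\Delta$. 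So the crux is a careful dimension/nonemptiness argument: using the transversality hypothesis and the earlier proposition, show that for every multi-index $g$ the extended cell $\bigcap_{i}S_{(i,g_{i})}$ has nonempty interior, and that crossing a single level set $\psi_{i}^{-1}(a_{(i,g_{i}-1)})$ changes exactly the $i$-th index; once this is established, the identification of invariants, guards, resets, and labels is routine bookkeeping against the definition of parallel composition, and the isomorphism follows.
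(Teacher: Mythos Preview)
Your proposal is correct and follows essentially the same approach as the paper: both build the parallel composition $\mathcal{A}_{1}\parallel\dots\parallel\mathcal{A}_{k}$, use the transversality hypothesis \eqref{eqn:parallel_comp} to guarantee that every product location $(l_{(1,g_{1})},\dots,l_{(k,g_{k})})$ corresponds to a genuine nonempty extended cell, and then verify component-by-component that the clocks, the conjunctive invariants $\bigwedge_{i} I_{i}(l_{(i,g_{i})})$, and the transitions (via disjoint alphabets $\Sigma_{i}$, hence pure interleaving) agree. Your write-up is more explicit about where transversality is used and why it cannot be dropped, but the underlying argument is the same.
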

\begin{proof}
Consider the timed automaton $\mathcal{A}_{||}(\mathcal{S})=\mathcal{A}_{1}(\mathcal{S}_{1})||\dots||\mathcal{A}_{k}(\mathcal{S}_{k})$ where $\mathcal{A}_{i}(\mathcal{S}_{i})=(L_{i},L_{0,i},C_{i},\Sigma_{i},I_{i},\Delta_{i})$ and $L_{i}=\{l_{(i,1)},\dots,l_{(i,|\mathcal{S}_{i}|)}\}$, abstracting the slices $S_{(i,1)},\dots,S_{(i,|\mathcal{S}_{i}|)}$. Then the timed automaton $\mathcal{A}_{||}(\mathcal{S})$ is given by
\begin{itemize}
\item \textbf{Locations:} $L=L_{1}\times\dots\times L_{k}$, which according to Definition~10 in \cite{CDC2010} represents extended cells, if the transversal intersection of all slices is nonempty i.e. \eqref{eqn:parallel_comp} is satisfied.
\item \textbf{Clocks:} $C=\{c_{i},\dots,c_{k}\}$, where $c_{i}$ monitors the time for being in a slice of $\mathcal{S}_{i}$.
\item \textbf{Invariants:} The invariant for location $l_{\tn{ex},g}=(l_{(1,g_{1})},\dots,l_{(k,g_{k})})$ is identical to (18) in \cite{CDC2010} and is
\begin{align}
I(l_{\tn{ex},g})&= \bigwedge_{i=1}^{k}I_{i}(l_{(i,g_{i})}).
\end{align}
\item \textbf{Input Alphabet:} $\Sigma=\{\sigma_{1},\dots,\sigma_{k}\}$.
\item \textbf{Transition relations:} $\Sigma_{i}$ is disjoint from $\Sigma_{j}$ for all $i\neq j$; hence, item 1) in Definition~15 in \cite{CDC2010} never happens. 
\end{itemize}
This implies that $\mathcal{A}_{||}(\mathcal{S})=\mathcal{A}_{1}(\mathcal{S}_{1})||\dots||\mathcal{A}_{k}(\mathcal{S}_{k})$ and $\mathcal{A}_{\tn{ex}}(\mathcal{S})$ are isomorph.
\end{proof}
\setcounter{proposition}{4}
\begin{proposition}
Let $\mathcal{S}=\{\mathcal{S}_{1},\dots,\mathcal{S}_{k}\}$ be a collection of slice-families, and $\psi_{i}$ be a partitioning function for $\mathcal{S}_{i}$. The timed automata $\mathcal{A}_{\tn{ex}}(\mathcal{S})$ and $\mathcal{A}(\mathcal{S})$ are bisimilar if for each cell $e_{(g,h)}\in K(\mathcal{S})$ and each $i\in\{1,\dots,k\}$
\begin{subequations}\label{eqn:prop_eq_bisimilar}
\begin{align}
&e_{(g,h)}\bigcap \psi_{i}^{-1}(a_{(i,g_{i}-1)})\neq\emptyset\indent \forall\,h\tn{ or}\\
&e_{(g,h)}\bigcap \psi_{i}^{-1}(a_{(i,g_{i}-1)})=\emptyset\indent\forall\,h.
\end{align}
\end{subequations}
\end{proposition}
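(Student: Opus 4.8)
The plan is to exhibit an explicit bisimulation between the timed transition systems induced by $\mathcal{A}(\mathcal{S})$ and $\mathcal{A}_{\tn{ex}}(\mathcal{S})$. Every cell $e_{(g,h)}$ is by construction a connected component of the extended cell $e_{\tn{ex},g}$, and the clock set, the invariant $I(e_{(g,h)})=I(e_{\tn{ex},g})$, the guards, and the reset sets carried by $e_{(g,h)}$ are exactly those inherited from $e_{\tn{ex},g}$ (they depend only on the slice multi-index $g$, not on the component index $h$). I would therefore set
\begin{align}
R=\big\{\,\big((e_{(g,h)},v),(e_{\tn{ex},g},v)\big) : e_{(g,h)}\subseteq e_{\tn{ex},g},\ v:C\to\mathds{R}_{\geq0}\,\big\}.
\end{align}
This $R$ is total on both sides (each cell lies in exactly one extended cell, each extended cell is the union of its cells), and it relates the initial states, since the initial cell of $\mathcal{A}(\mathcal{S})$ is contained in the initial extended cell of $\mathcal{A}_{\tn{ex}}(\mathcal{S})$ and both start with the valuation $v_{0}$.

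It then remains to check the transfer property for delay steps and for discrete steps. Delay steps are immediate: a delay of duration $t$ is enabled from $(e_{(g,h)},v)$ exactly when $v+t$ satisfies $I(e_{(g,h)})=I(e_{\tn{ex},g})$, hence exactly when it is enabled from $(e_{\tn{ex},g},v)$, and the resulting pair $\big((e_{(g,h)},v+t),(e_{\tn{ex},g},v+t)\big)$ is again in $R$. For a discrete step of $\mathcal{A}(\mathcal{S})$, suppose $e_{(g,h)}\overset{\sigma_{i}}{\longrightarrow}e_{(g',h')}$. By the construction of the transition relation in \cite{CDC2010} (cf.\ Proposition~\ref{prop:deterministic_TA}) such an edge exists only if $e_{(g,h)}\cap\psi_{i}^{-1}(a_{(i,g_{i}-1)})\neq\emptyset$, and then $g'$ is $g$ with its $i$-th index decremented and $e_{(g',h')}$ is a component of $e_{\tn{ex},g'}$. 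Since $e_{(g,h)}\subseteq e_{\tn{ex},g}$, we get $e_{\tn{ex},g}\cap\psi_{i}^{-1}(a_{(i,g_{i}-1)})\neq\emptyset$, so $\mathcal{A}_{\tn{ex}}(\mathcal{S})$ has the transition $e_{\tn{ex},g}\overset{\sigma_{i}}{\longrightarrow}e_{\tn{ex},g'}$ with the same guard and reset, and $\big((e_{(g',h')},v'),(e_{\tn{ex},g'},v')\big)\in R$.

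The essential point is the opposite direction. Assume $\big((e_{(g,h)},v),(e_{\tn{ex},g},v)\big)\in R$ and $e_{\tn{ex},g}\overset{\sigma_{i}}{\longrightarrow}e_{\tn{ex},g'}$. This transition exists precisely because $e_{\tn{ex},g}\cap\psi_{i}^{-1}(a_{(i,g_{i}-1)})\neq\emptyset$, and since $e_{\tn{ex},g}=\bigcup_{h''}e_{(g,h'')}$, at least one cell $e_{(g,h'')}$ meets $\psi_{i}^{-1}(a_{(i,g_{i}-1)})$. Hence the second alternative in \eqref{eqn:prop_eq_bisimilar} is false, so by hypothesis the first alternative holds: \emph{every} cell $e_{(g,h)}$ of $e_{\tn{ex},g}$ meets $\psi_{i}^{-1}(a_{(i,g_{i}-1)})$. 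In particular the cell in our related pair does, so $\mathcal{A}(\mathcal{S})$ contains a transition $e_{(g,h)}\overset{\sigma_{i}}{\longrightarrow}e_{(g',h')}$ with the same guard and reset and with $e_{(g',h')}$ a component of the same $e_{\tn{ex},g'}$ (the same $g'$, namely $g$ with its $i$-th index decremented). Thus $\big((e_{(g',h')},v'),(e_{\tn{ex},g'},v')\big)\in R$, and the transfer property holds in both directions, so $R$ is a bisimulation and $\mathcal{A}_{\tn{ex}}(\mathcal{S})$ and $\mathcal{A}(\mathcal{S})$ are bisimilar.

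I expect the only real difficulty to be bookkeeping with the definitions imported from \cite{CDC2010}: confirming that a $\sigma_{i}$-labelled edge is governed exactly by nonemptiness of the intersection with $\psi_{i}^{-1}(a_{(i,g_{i}-1)})$ and that its target is always a component of the extended cell obtained by decrementing the $i$-th slice index. Once that is in place, the hypothesis \eqref{eqn:prop_eq_bisimilar} does precisely the work needed — it promotes ``some component of $e_{\tn{ex},g}$ can fire $\sigma_{i}$'' to ``every component can'' — which is exactly what separates a bisimulation from a mere simulation in this setting.
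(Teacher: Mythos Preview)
Your proof is correct and follows essentially the same approach as the paper: both arguments hinge on the observations that invariants and guards depend only on the slice multi-index $g$ (not on the component index $h$), and that hypothesis \eqref{eqn:prop_eq_bisimilar} forces all cells within an extended cell to share the same set of outgoing $\sigma_{i}$-transitions. Your version is more explicit --- you actually construct the bisimulation relation and verify the transfer property in both directions --- whereas the paper's proof is a terse sketch of these same ingredients.
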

If \eqref{eqn:prop_eq_bisimilar} holds, then all cells in each extended cell have the same symbols on their outgoing transitions.

\begin{proof}
Let $e_{(g,h)}$ with $h=1,\dots,m$ be the cells which union is the extended cell $e_{\tn{ex},g}$. Then
\begin{align}
I(e_{(g,h)})=I(e_{(g,k)})\indent\forall h,k\in\{1,\dots,m\}
\end{align}
as the invariants are calculated based on slices (18) in \cite{CDC2010}.

If the partition satisfies \eqref{eqn:prop_eq_bisimilar}, then the same outgoing transitions exist for all cells within the same extended cell. Furthermore,
\begin{align}
G_{(g,h)\rightarrow(g',h')}=G_{(g,k)\rightarrow(g',k')}\indent\forall h,k\in\{1,\dots,m\}
\end{align}
since the guards are also calculated based on slices (19b) in \cite{CDC2010}. This implies that all possible behaviors from each cell in an extended cell are the same; hence, $\mathcal{A}(\mathcal{S})$ is bisimilar to a timed automaton $\mathcal{A}_{\tn{ex}}(\mathcal{S})$.
\end{proof}

\section{Conditions for the Partitioning}\label{sec:partitioning_ss}
A sound and a complete abstraction of a dynamical system is illustrated in Fig.~\ref{fig:tight_definition}.
\begin{figure}[!htb]
    \centering
       \includegraphics[scale=1]{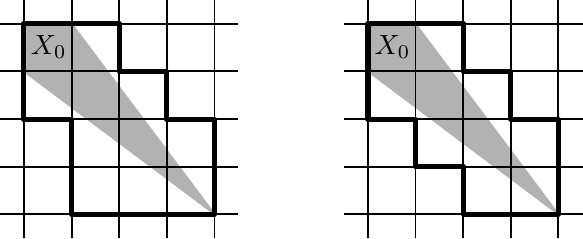}
       \caption{Illustration of the reachable set of a dynamical system (gray) from initial set $X_{0}$ and a sound approximation of this (cells within bold black lines) on the left and a complete abstraction on the right.}
       \label{fig:tight_definition}
\end{figure}
Definitions of sound and complete abstractions are available in \cite{CDC2010}.

\begin{proposition}\label{prop:par_also_sound_complete}
A timed automaton $\mathcal{A}_{\tn{ex}}=\mathcal{A}_{1}||\dots||\mathcal{A}_{k}$, with locations abstracting extended cells, is a sound (complete) abstraction of the system $\Gamma$ if and only if $\mathcal{A}_{1},\dots,\mathcal{A}_{k}$ are sound (complete) abstractions of $\Gamma$.
\end{proposition}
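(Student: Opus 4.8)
The plan is to reduce the whole statement to the structure of the parallel composition and then prove a single reachability--factorisation lemma. By Proposition~\ref{prop:parallel_comp} we may work directly with $\mathcal{A}_{\tn{ex}}=\mathcal{A}_{1}||\dots||\mathcal{A}_{k}$, in which each component $\mathcal{A}_{i}$ owns the single clock $c_{i}$, the alphabets $\Sigma_{i}$ are pairwise disjoint, and the invariant of an extended location $(l_{(1,g_{1})},\dots,l_{(k,g_{k})})$ equals $\bigwedge_{i=1}^{k}I_{i}(l_{(i,g_{i})})$; in particular no discrete step of $\mathcal{A}_{i}$ synchronises with any step of $\mathcal{A}_{j}$ for $j\neq i$, and every guard, reset and invariant of $\mathcal{A}_{i}$ constrains only $c_{i}$. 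I would first record the elementary geometric fact that a state $x$ lies in the extended cell $e_{\tn{ex},g}=\bigcap_{i=1}^{k}S_{(i,g_{i})}$ if and only if $x\in S_{(i,g_{i})}$ for every $i$; this is what links a location of $\mathcal{A}_{\tn{ex}}$ to the tuple of locations of the $\mathcal{A}_{i}$ from which it is built.

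The technical core is the lemma that, identifying extended cells with tuples of slices, $\tn{Reach}_{\{t\}}(\mathcal{A}_{\tn{ex}},L_{0})=\prod_{i=1}^{k}\tn{Reach}_{\{t\}}(\mathcal{A}_{i},L_{0,i})$ for every $t\geq0$. The inclusion ``$\subseteq$'' follows by projecting a run of $\mathcal{A}_{\tn{ex}}$ onto its $i$-th component: deleting the other coordinates and the letters of $\Sigma_{j}$ for $j\neq i$ turns time steps into time steps and leaves the valuation of $c_{i}$, the $\mathcal{A}_{i}$-guards and the $\mathcal{A}_{i}$-invariants unchanged, so the result is a legal run of $\mathcal{A}_{i}$ reaching $l_{(i,g_{i})}$ at the same time $t$. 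The inclusion ``$\supseteq$'' is the interleaving construction: given runs $\rho_{1},\dots,\rho_{k}$ of $\mathcal{A}_{1},\dots,\mathcal{A}_{k}$ that reach $l_{(1,g_{1})},\dots,l_{(k,g_{k})}$ all at time $t$, take the common refinement of their partitions of $[0,t]$ into time steps and interleave the discrete steps; these are instantaneous and sit on pairwise disjoint alphabets, so their order is immaterial, $c_{i}$ is reset in the merged run exactly when $\rho_{i}$ resets it, and hence every conjunct of every invariant $\bigwedge_{i}I_{i}$ and every guard still holds along the merged run, which is thus a legal run of $\mathcal{A}_{\tn{ex}}$ reaching $(l_{(1,g_{1})},\dots,l_{(k,g_{k})})$ at time $t$.

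With the lemma available, the proposition follows by unwinding the definitions of sound and complete abstraction pointwise in time. For soundness, ``$\Leftarrow$'': if $x=\phi_{\Gamma}(t,x_{0})$ is reachable and every $\mathcal{A}_{i}$ is sound, then the slice $S_{(i,g_{i})}$ containing $x$ lies in $\tn{Reach}_{\{t\}}(\mathcal{A}_{i})$ for each $i$, so by the lemma $\bigcap_{i}S_{(i,g_{i})}\in\tn{Reach}_{\{t\}}(\mathcal{A}_{\tn{ex}})$, and this extended cell contains $x$ by the geometric fact, so $\mathcal{A}_{\tn{ex}}$ is sound; ``$\Rightarrow$'' is obtained by projecting the run witnessing reachability of the extended cell containing $x$ onto component $i$. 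Completeness is handled symmetrically with the inclusions reversed: a reachable location of $\mathcal{A}_{\tn{ex}}$ (resp.\ of $\mathcal{A}_{i}$) must be backed by a genuine trajectory of $\Gamma$, the lemma transporting a reachable extended location to reachable component locations and back, with reachability of a component slice lifted to a reachable extended location by filling the remaining coordinates with the slices the flow visits at that time.

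The step I expect to dominate the work is the completeness ``$\Leftarrow$'' direction: from a reachable extended location of $\mathcal{A}_{\tn{ex}}$ the lemma delivers, for each slice-family, a trajectory of $\Gamma$ witnessing that the corresponding slice is occupied at time $t$, but these trajectories may differ, whereas completeness of $\mathcal{A}_{\tn{ex}}$ demands a \emph{single} trajectory lying in the intersection of the slices at time $t$. Producing that common witness is the real content, and I expect it to require structure beyond Proposition~\ref{prop:parallel_comp}---for instance the fact that the slices are sub-level sets of Lyapunov functions, so that along any trajectory each $\psi_{i}$ is strictly decreasing and the set of times spent in a fixed slice of family $i$ is an interval, or the transversality/genericity assumptions that make the computed guards tight---together with the more routine but still essential verification that the interleaved run built in the ``$\supseteq$'' half of the lemma is a legal timed run, i.e.\ that synchronising the time-elapses by a common refinement violates no invariant and that clocks and resets never interact across components.
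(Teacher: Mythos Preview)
Your strategy coincides with the paper's: both argue by factorising the reachability of $\mathcal{A}_{\tn{ex}}$ through the components. The paper is much terser --- it reformulates soundness for extended cells, asserts without proof that this is equivalent to soundness of each $\mathcal{A}_{i}$ separately (your factorisation lemma), and then disposes of completeness with the sentence ``Similar arguments can be used to prove the completeness part.'' Your version makes the lemma explicit via projection and interleaving, which is exactly the mechanism the paper is tacitly invoking.

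Where you go further than the paper is in flagging the completeness ``$\Leftarrow$'' direction. Your worry is legitimate and the paper does not address it: from completeness of each $\mathcal{A}_{i}$ one obtains, for a reachable extended location $(l_{(1,g_{1})},\dots,l_{(k,g_{k})})$, a trajectory of $\Gamma$ in $S_{(i,g_{i})}$ at time $t$ for each $i$ \emph{separately}, and nothing in the parallel-composition structure alone forces these to be the same trajectory, so one cannot conclude that $\bigcap_{i}S_{(i,g_{i})}$ is hit. The paper's ``similar arguments'' glosses over precisely this point. Your suspicion that extra structure is needed is well founded --- indeed, under the hypothesis of Proposition~\ref{prop:suf_cond_completeness} the transit time through each slice is the same for every trajectory, which forces the witnesses to agree, but that is an additional assumption not stated in Proposition~\ref{prop:par_also_sound_complete}. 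In short: your proof of soundness and of the ``$\Rightarrow$'' half of completeness is the paper's argument made rigorous; the gap you identify in the ``$\Leftarrow$'' half of completeness is real and is not closed by the paper either.
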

\begin{proof}
If the locations of $\mathcal{A}_{\tn{ex}}$ are extended cells, then soundness of $\mathcal{A}_{\tn{ex}}$ can be reformulated to the following.


A timed automaton $\mathcal{A}_{\tn{ex}}$ with $L_{0}=\{e_{\tn{ex},g}|g\in\mathcal{G}_{0}\subseteq\mathcal{G}\}$ is said to be a sound abstraction of $\Gamma$ with $X_{0}=\bigcup_{g\in\mathcal{G}_{0}}e_{\tn{ex},g}$ on $[t_{1}, t_{2}]$ if for all $t\in[ t_{1}, t_{2}]$ and for all $g\in\mathcal{G}$
\begin{subequations}
\begin{align}
&\bigcap_{i=1}^{k}S_{(i,g_{i})}\cap\tn{Reach}_{[t,t]}(\Gamma,X_{0})\neq\emptyset\indent\tn{implies}\\
&\notag\exists l_{0}\in L_{0}\tn{ such that}\\
&\bigcap_{i=1}^{k}S_{(i,g_{i})}\in\alpha_{K}^{-1}(\phi_{\mathcal{A}_{\tn{ex}}}(t,l_{0}))
\end{align}
\end{subequations}
which is equivalent to: For all $i=\{1,\dots,k\}$, all $g\in\mathcal{G}$, and for all $t\in[ t_{1}, t_{2}]$
\begin{subequations}\label{eqn:sound_c}
\begin{align}
&S_{(i,g_{i})}\cap\tn{Reach}_{[t,t]}(\Gamma,X_{0})\neq\emptyset\indent\tn{implies}\\
&\notag\exists l_{0,i}\in L_{0,i}\tn{ such that}\\
&\alpha_{K}^{-1}(\phi_{\mathcal{A}_{i}}(t,l_{0,i}))= S_{(i,g_{i})}.
\end{align}
\end{subequations}
From \eqref{eqn:sound_c} it is seen that $\mathcal{A}_{\tn{ex}}=\mathcal{A}_{1}||\dots||\mathcal{A}_{k}$ is sound if and only if $\mathcal{A}_{i}$ is sound for $i=1,\dots,k$. Similar arguments can be used to prove the completeness part of Proposition~\ref{prop:par_also_sound_complete}.
\end{proof}

\begin{proposition}[Sufficient Condition for Soundness]\label{prop:suf_cond_soundness}
A timed automaton $\mathcal{A}(\mathcal{S})$ is a sound abstraction of the system $\Gamma$, if its invariants and guards are formed using
\begin{subequations}\label{eqn:suf_cond_soundness}
\begin{align}
\underline{t}_{S_{(i,g_{i})}}&\leq\frac{|a_{(i,g_{i})}-a_{(i,g_{i}-1)}|}{\sup\{|\dot{\psi}_{i}(x)|\in\mathds{R}_{\geq0}|x\in S_{(i,g_{i})}\}}\\
\overline{t}_{S_{(i,g_{i})}}&\geq\frac{|a_{(i,g_{i})}-a_{(i,g_{i}-1)}|}{\inf\{|\dot{\psi}_{i}(x)|\in\mathds{R}_{\geq0}|x\in S_{(i,g_{i})}\}}
\end{align}
\end{subequations}
where $\dot{\psi}_{i}(x)$ is defined as shown in \eqref{eqn:Lyap_der}.
\end{proposition}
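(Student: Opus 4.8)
The plan is to show that, with the invariants and guards chosen as in \eqref{eqn:suf_cond_soundness}, every trajectory of $\Gamma$ can be \emph{mirrored} by a run of $\mathcal{A}(\mathcal{S})$; this is precisely what soundness asks for, in the reformulated form used in the proof of Proposition~\ref{prop:par_also_sound_complete}, namely that whenever $\tn{Reach}_{[t,t]}(\Gamma,X_{0})$ meets a cell, the associated location lies in $\phi_{\mathcal{A}(\mathcal{S})}(t,l_{0})$ for some initial location $l_{0}$. First I would fix $x_{0}\in X_{0}$ and $t\ge 0$ and follow the curve $s\mapsto\phi_{\Gamma}(s,x_{0})$ on $[0,t]$. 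Within each slice-family $\mathcal{S}_{i}$ the partitioning function $\psi_{i}$ is strictly monotone along this curve wherever the curve avoids the critical points, since $\dot\psi_{i}<0$ on $G\setminus\{p\}$ by Definition~\ref{def:lyapunov_function} and \eqref{eqn:Lyap_der}. Consequently, every time the curve crosses a slice $S_{(i,g_{i})}$ completely it enters on one of the bounding level sets $\psi_{i}^{-1}(a_{(i,g_{i})})$, $\psi_{i}^{-1}(a_{(i,g_{i}-1)})$ and leaves on the other, so the whole argument reduces to bounding the dwell time $\tau$ in such a slice.

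The core estimate is an application of the fundamental theorem of calculus to \eqref{eqn:Lyap_der}. If the curve lies in $S_{(i,g_{i})}$ for $s\in[s_{0},s_{0}+\tau]$ and crosses it completely, then
\begin{equation}
\psi_{i}(\phi_{\Gamma}(s_{0}+\tau,x_{0}))-\psi_{i}(\phi_{\Gamma}(s_{0},x_{0}))=\int_{s_{0}}^{s_{0}+\tau}\dot\psi_{i}(\phi_{\Gamma}(s,x_{0}))\,ds ,
\end{equation}
and the magnitude of the left-hand side equals $|a_{(i,g_{i})}-a_{(i,g_{i}-1)}|$ because the two endpoints lie on the two bounding level sets. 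Since $\dot\psi_{i}$ keeps a constant sign along the piece and $\phi_{\Gamma}(s,x_{0})\in S_{(i,g_{i})}$ throughout, I would bound the integrand below by $\inf\{|\dot\psi_{i}(x)|:x\in S_{(i,g_{i})}\}$ and above by $\sup\{|\dot\psi_{i}(x)|:x\in S_{(i,g_{i})}\}$; integrating over the interval of length $\tau$ then shows that $|a_{(i,g_{i})}-a_{(i,g_{i}-1)}|$ lies between $\tau\inf\{|\dot\psi_{i}(x)|:x\in S_{(i,g_{i})}\}$ and $\tau\sup\{|\dot\psi_{i}(x)|:x\in S_{(i,g_{i})}\}$, and dividing through and comparing with \eqref{eqn:suf_cond_soundness} yields exactly $\underline{t}_{S_{(i,g_{i})}}\le\tau\le\overline{t}_{S_{(i,g_{i})}}$. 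The degenerate cases are harmless: if $S_{(i,g_{i})}$ contains a critical point the infimum is $0$, so $\overline{t}_{S_{(i,g_{i})}}=\infty$ and the location carries no invariant, consistently with a trajectory that may remain in the slice forever; and if the slice is unbounded the supremum may be $\infty$, giving $\underline{t}_{S_{(i,g_{i})}}=0$.

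It then remains to assemble the mirroring run. Let $0=s_{0}<s_{1}<\dots<s_{m}\le t$ be the instants at which the curve crosses a level set, so that on each $[s_{j-1},s_{j}]$ it stays in a single cell and at time $t$ it lies in the cell reached last; I would build the run of $\mathcal{A}(\mathcal{S})$ that starts in the location of that first cell (which is contained in $X_{0}$), at each $s_{j}$ takes the transition triggered by the corresponding level crossing after dwelling exactly $s_{j}-s_{j-1}$ time units, resetting only the clock $c_{i}$ of the slice-family whose level was crossed, and finally dwells $t-s_{m}$ time units in the last location. By the estimate above every guard $c_{i}\ge\underline{t}$ holds at the crossing instant and every invariant $c_{i}\le\overline{t}$ holds throughout, so this is a legal run of $\mathcal{A}(\mathcal{S})$, and therefore the last cell's location belongs to $\phi_{\mathcal{A}(\mathcal{S})}(t,l_{0})$; this gives soundness.

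The point that needs care, and which I expect to be the real obstacle rather than the calculus, is the \emph{first} cell: there the curve may start in the interior and leave before $\underline{t}$ of that slice has elapsed, so the lower guard could a priori block the first transition. I would dispose of this by invoking the convention on the initial locations and the initial clock valuation $v_{0}$ used when $\mathcal{A}(\mathcal{S})$ is constructed in \cite{CDC2010} --- on the initial segment only the invariant bound $\tau\le\overline{t}$ is actually required, whereas every genuine full crossing further downstream obeys both bounds --- and by using that $X_{0}$ is a union of whole cells. A secondary technicality is to verify that the crossing times $s_{j}$ are well defined and that, along each $\psi_{i}$, the curve eventually meets the chosen levels transversally rather than tangentially; this follows from the strict decrease of $\psi_{i}$ together with the transversality built into the slice construction.
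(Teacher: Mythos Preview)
Your argument is correct and follows the same line as the paper's proof: the key step in both is that the dwell time of any trajectory inside a slice $S_{(i,g_{i})}$ lies in $[\underline{t}_{S_{(i,g_{i})}},\overline{t}_{S_{(i,g_{i})}}]$, which forces the reachable set of $\Gamma$ to be contained in $\alpha_{K}^{-1}(\tn{Reach}(\mathcal{A},L_{0}))$. The paper's proof merely asserts this crossing-time bound in one line (for all $x_{0}\in\psi_{i}^{-1}(a_{(i,g_{i})})$ there exists $t\in[\underline{t},\overline{t}]$ with $\phi_{\Gamma}(t,x_{0})\in\psi_{i}^{-1}(a_{(i,g_{i}-1)})$), whereas you actually derive it via the fundamental theorem of calculus and then go on to construct the mirroring run explicitly; so your proposal is essentially a detailed expansion of the paper's sketch rather than a different approach.
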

\begin{proof}
Let $\mathcal{A}(\mathcal{S})$ be a timed automaton with $L_{0}=\{e_{i}|i\in\mathcal{I}\}$, be an abstraction of $\Gamma$ with initial set $X_{0}=\bigcup_{i\in\mathcal{I}}e_{i}$. If guards and invariants of $\mathcal{A}(\mathcal{S})$ satisfy \eqref{eqn:suf_cond_soundness}, then
\begin{align}
\tn{Reach}_{[t_{1},t_{2}]}(\Gamma,X_{0})\subseteq\alpha_{K}^{-1}(\tn{Reach}_{[t_{1},t_{2}]}(\mathcal{A},L_{0}))
\end{align}
since for all $x_{0}\in \psi_{i}^{-1}(a_{(i,g_{i})})$ there exists $t\in[\underline{t}_{S_{(i,g_{i})}},\overline{t}_{S_{(i,g_{i})}}]$ such that
\begin{align}
\phi_{\Gamma}(t,x_{0})\in \psi_{i}^{-1}(a_{(i,g_{i}-1)}).
\end{align}
\end{proof}
\begin{proposition}[Sufficient Condition for Completeness]\label{prop:suf_cond_completeness}
Let $\mathcal{S}=\{\mathcal{S}_{i}|i=1,\dots,k\}$ be a collection of slice-families and let
\begin{align}
S_{(i,g_{i})}=\psi_{i}^{-1}([a_{(i,g_{i}-1)}, a_{(i,g_{i})}]).
\end{align}
A deterministic timed automaton is a complete abstraction if
\begin{enumerate}
\item $\overline{t}_{S_{(i,g_{i})}}=\underline{t}_{S_{(i,g_{i})}}=t_{(i,g_{i})}$ and
\item for any $g\in\mathcal{G}$ with $g_{i}\geq2$ there exists a time $t_{(i,g_{i})}$ such that $\forall$ $x_{0}\in \psi_{i}^{-1}(a_{(i,g_{i})})$
\begin{align}
\phi_{\Gamma}(t_{(i,g_{i})},x_{0})\in \psi_{i}^{-1}(a_{(i,g_{i}-1)}).
\end{align}
\end{enumerate}
\end{proposition}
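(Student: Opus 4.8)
The plan is to establish the reachability inclusion that characterises a complete abstraction, namely $\alpha_{K}^{-1}(\tn{Reach}_{[t_{1},t_{2}]}(\mathcal{A}(\mathcal{S}),L_{0}))\subseteq\tn{Reach}_{[t_{1},t_{2}]}(\Gamma,X_{0})$ (see \cite{CDC2010}), by showing that every run of the deterministic automaton $\mathcal{A}(\mathcal{S})$ is ``shadowed'' by a concrete trajectory of $\Gamma$, cell by cell and with matching timing. Mirroring the argument for Proposition~\ref{prop:suf_cond_soundness}, the only data needed about a trajectory $t\mapsto\phi_{\Gamma}(t,x_{0})$ are the $k$ scalar signals $t\mapsto\psi_{i}(\phi_{\Gamma}(t,x_{0}))$; since each $\psi_{i}$ is a Lyapunov function for $\Gamma$, every such signal is strictly decreasing away from the equilibrium, so the trajectory crosses the level sets $\psi_{i}^{-1}(a_{(i,g_{i})})$ in order of decreasing $g_{i}$, and the cell occupied at time $t$ is exactly determined by the tuple of current slice indices.

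First I would fix a location $l\in\tn{Reach}_{[t_{1},t_{2}]}(\mathcal{A}(\mathcal{S}),L_{0})$, so that $l\in\phi_{\mathcal{A}}(t,l_{0})$ for some $l_{0}\in L_{0}$ and some $t\in[t_{1},t_{2}]$, and choose the initial point $x_{0}$ on the outer boundary of the initial cell, i.e. $x_{0}\in\psi_{i}^{-1}(a_{(i,g_{i})})\cap X_{0}$ for the slice indices $g_{i}$ of $l_{0}$; this choice aligns the clock resets of $\mathcal{A}(\mathcal{S})$ with the level-set crossings of $\phi_{\Gamma}(\cdot,x_{0})$. By Condition~2 the transit time of the flow across slice $S_{(i,g_{i})}$ from its outer to its inner boundary is exactly $t_{(i,g_{i})}$, and by the semigroup property $\phi_{\Gamma}(s+\tau,x_{0})=\phi_{\Gamma}(\tau,\phi_{\Gamma}(s,x_{0}))$ the residual transit time from any interior point of that slice is $t_{(i,g_{i})}$ minus the time already spent in it. By Condition~1 the guard and invariant attached to the transition leaving $S_{(i,g_{i})}$ are exactly $c_{i}=t_{(i,g_{i})}$, so the instants at which the deterministic automaton is forced to take its transitions coincide with the instants at which $\phi_{\Gamma}(\cdot,x_{0})$ crosses the corresponding level sets. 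An induction on the discrete steps of the run then shows that $\phi_{\Gamma}(\cdot,x_{0})$ occupies precisely the sequence of cells $\alpha_{K}^{-1}(l_{0}),\alpha_{K}^{-1}(l_{1}),\dots$ on exactly the time intervals during which the run occupies $l_{0},l_{1},\dots$, and in particular $\phi_{\Gamma}(t,x_{0})\in\alpha_{K}^{-1}(l)$, which gives the claimed inclusion. Determinism of $\mathcal{A}(\mathcal{S})$, assumed in the hypothesis (cf.\ Proposition~\ref{prop:deterministic_TA}), is what guarantees that the run being shadowed is the only run out of $l_{0}$, so no spurious location is reached that the flow misses. For the innermost slices ($g_{i}=1$) no such transition exists, the flow converges to the equilibrium and stays in $\alpha_{K}^{-1}(l)$ thereafter, matching the behaviour of the automaton in its terminal location; this is exactly why Condition~2 is imposed only for $g_{i}\geq2$.

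The main obstacle is the bookkeeping that makes the shadowing induction go through for all $k$ slice-families at once: a single trajectory must realise the unique run of the product automaton, whose transitions are interleaved according to which clock $c_{i}$ reaches its threshold first, with the other clocks carrying nonzero offsets. The facts that defuse this are (i) the strict monotonicity of each $\psi_{i}$ along trajectories, which fixes the order and uniqueness of the crossings; (ii) Condition~2, which pins the absolute transit time across each slice independently of the entry point on its outer boundary; and (iii) the semigroup identity, which makes the per-slice residual times add up correctly when several clocks run concurrently with different offsets. With these in hand the equality of the automaton's transition times and the flow's crossing times is forced by Condition~1, and the inclusion follows. The one ingredient I would simply cite from \cite{CDC2010} is that the clock resets of $\mathcal{A}(\mathcal{S})$ are placed exactly at the slice boundaries, which is how the construction there defines guards and resets.
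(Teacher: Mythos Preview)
Your argument is correct and rests on the same idea as the paper's: Condition~2 says every trajectory takes exactly time $t_{(i,g_{i})}$ to traverse slice $S_{(i,g_{i})}$, and Condition~1 forces the automaton's guards and invariants to equal that same time, so runs and trajectories stay synchronised. The paper's own proof is essentially those two sentences and nothing more, whereas you spell out the shadowing construction and the induction over discrete steps; the only detail worth making explicit is why the ``corner'' point $x_{0}\in\bigcap_{i}\psi_{i}^{-1}(a_{(i,g_{i})})$ actually lies in the closure of the chosen initial cell, which follows from the transversality and non-emptiness assumptions on the slice intersections in \cite{CDC2010}.
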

\begin{proof}
The proposition states that it takes the same time for all trajectories of $\Gamma$ to propagate between any two level sets of $\psi_{i}$. From this it follows that $\mathcal{A}(\mathcal{S})$ is complete if $\overline{t}_{S_{(i,g_{i})}}$ and $\underline{t}_{S_{(i,g_{i})}}$ are equal to $t_{(i,g_{i})}$.
\end{proof}
\begin{proposition}[Nec. Cond. for Refinable Abstraction]\label{nec:refinable_abstraction}
If $\mathcal{A}(\mathcal{S})$ is a refinable abstraction of a system $\Gamma$, then $\mathcal{S}$ is a collection of $n$ slice-families.
\end{proposition}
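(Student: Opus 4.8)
The plan is to prove the claim quantitatively: a refinable abstraction must be one whose reachable cells can be made to have arbitrarily small diameter, and a dimension count then shows this is possible only when $\mathcal{S}$ consists of exactly $n$ slice--families. First I would fix the reduction ``refinable $\Rightarrow$ cell diameters vanish''. Take a single initial state, $X_{0}=\{x_{0}\}$. For the automaton the initial location is the cell $e\ni x_{0}$, so by soundness $e\subseteq\alpha_{K}^{-1}(\tn{Reach}_{[t_{1},t_{2}]}(\mathcal{A},L_{0}))$, whereas $\tn{Reach}_{[t_{1},t_{2}]}(\Gamma,X_{0})$ is only a trajectory arc through $x_{0}$. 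Hence the over--approximation already exceeds the true reachable set by essentially all of $e$, and the error (measured, say, by Hausdorff distance) is bounded below by $\sup_{z\in e}\tn{dist}(z,\tn{Reach}_{[t_{1},t_{2}]}(\Gamma,X_{0}))$; refinability, being the ability to push this below any $\varepsilon>0$, forces $\tn{diam}(e)\to0$ along the refinement.

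Next I would show this fails whenever $|\mathcal{S}|=k<n$. Let $\psi_{1},\dots,\psi_{k}$ be the partitioning functions, differentiable on a common open set $G$, and set $\Psi:=(\psi_{1},\dots,\psi_{k})$. Choose an interior point $x_{0}\in G$ in the open dense set on which $\tn{rank}\,D\Psi$ is locally constant, say $r\leq k<n$; by the constant--rank theorem there is a neighbourhood $U$ of $x_{0}$ in which $M:=\Psi^{-1}(\Psi(x_{0}))\cap U$ is a connected submanifold of dimension $n-r\geq1$, so $\tn{diam}(M)\geq D>0$ for a constant $D$ depending only on $\Psi$ and $x_{0}$, not on the partition. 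Now, in any refinement each $\psi_{i}$ is constant on $M$, so every point of $M$ lies in the same slice $S_{(i,g_{i})}=\psi_{i}^{-1}([a_{(i,g_{i}-1)},a_{(i,g_{i})}])$ as $x_{0}$, for every $i$; therefore $M$ lies inside a single extended cell, and being connected it lies in a single cell $e\ni x_{0}$, so $\tn{diam}(e)\geq D$ for every refinement. Since $\dot{\psi}_{i}<0$ along nonstationary trajectories, no trajectory stays inside a level set of any $\psi_{i}$, so $M$ meets the reachable arc of $x_{0}$ only at $x_{0}$ within $U$, and transversally; a point of $M$ at distance $\geq D/2$ from $x_{0}$ therefore stays at a partition--independent positive distance from $\tn{Reach}(\Gamma,X_{0})$, contradicting the first step. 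Hence $k\geq n$.

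Finally I would rule out $k>n$. The same vanishing--diameter requirement forces, at a generic interior $x_{0}$, the fibre $\Psi^{-1}(\Psi(x_{0}))$ to be locally the single point $\{x_{0}\}$, hence $\tn{rank}\,D\Psi(x_{0})=n$; combined with the transversality conditions underlying the partition construction (cf.\ Proposition~\ref{prop:parallel_comp}) and the fact that a slice--collection carries no redundant family, at most $n$ of the gradients $\nabla\psi_{1},\dots,\nabla\psi_{k}$ can be independent, so $k\leq n$. Together these give $k=n$.

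The step I expect to be the main obstacle is the first one: making ``refinable abstraction'' from the companion paper \cite{CDC2010} quantitative, i.e.\ verifying that its definition genuinely forces the cell diameters --- not merely some coarser global quantity --- to vanish, so that the local level--set obstruction in the second step really blocks refinement. A secondary technicality is the regularity used in the second step: working at a point of locally constant rank and invoking the constant--rank theorem is what makes $\Psi^{-1}(\Psi(x_{0}))$ a genuine submanifold of positive, partition--independent diameter, rather than a possibly lower--dimensional or disconnected set.
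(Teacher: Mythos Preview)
Your argument for $k<n$ is essentially the paper's own: refinability forces the cells to shrink into arbitrarily small neighbourhoods of the joint level sets $\bigcap_{i}\psi_{i}^{-1}(a_{(i,g_{i})})$, and a dimension count shows these joint level sets have positive dimension when $k<n$. The paper reaches the same conclusion by quoting (30) of \cite{CDC2010} for the shrinking step and then invoking the transversality dimension formula (Theorem~7.7 of \cite{topology_and_geometry}),
\[
\dim\Bigl(\pitchfork_{i=1}^{k}\psi_{i}^{-1}(a_{(i,g_{i})})\Bigr)=k(n-1)-(k-1)n=n-k,
\]
in place of your constant--rank argument; these are equivalent here, so this half of your proof is fine and matches the paper.

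The gap is your treatment of $k>n$. From $\tn{rank}\,D\Psi(x_{0})=n$ you cannot conclude $k\leq n$: having $k>n$ differentials of rank $n$ is perfectly consistent (they are simply linearly dependent), and your appeal to ``a slice--collection carries no redundant family'' is exactly the statement to be proved, not an available hypothesis. The transversality condition in Proposition~\ref{prop:parallel_comp} is pairwise and does not by itself exclude $k>n$ families. The paper handles both directions uniformly via the formula above: for $k>n$ the transversal intersection of $k$ codimension--one level sets has expected dimension $n-k<0$, hence is generically empty, so the extended cells $\pitchfork_{i=1}^{k}S_{(i,g_{i})}$, being contained in $B(2\epsilon)$--neighbourhoods of empty sets, cannot cover the state space once the slices are thin enough --- contradicting the requirement that $K(\mathcal{S})$ be a partition. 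To repair your argument you would need either this emptiness observation or an explicit non--redundancy hypothesis from \cite{CDC2010}; as written, the step ``at most $n$ of the gradients can be independent, so $k\leq n$'' is a non sequitur.
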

\begin{proof}
If $\mathcal{A}(\mathcal{S})$ is a refinable abstraction, then for any $\epsilon>0$ there exists a partitioning $K(\mathcal{S})$ such that (30) in \cite{CDC2010} holds for cells in $K(\mathcal{S})$. Therefore,
\begin{align}
S_{(i,g_{i})}\subset \psi_{i}^{-1}(a_{(i,g_{i})})+B(\epsilon)
\end{align}
where $\epsilon>0$. Note that $a_{(i,g_{i})}$ is a regular value of $\psi_{i}$, i.e. the dimension of the level set $\psi_{i}^{-1}(a_{(i,g_{i})})$ is $n-1$.
The locations of $\mathcal{A}(\mathcal{S})$ are cells for which
\begin{subequations}
\begin{align}
\bigcup_{h} e_{(g,h)}&=\pitchfork_{i=1}^{k}S_{(i,g_{i})}\\
&\subset\pitchfork_{i=1}^{k} \left(\psi_{i}^{-1}(a_{(i,g_{i})})+B(\epsilon)\right)\\
&\subset\pitchfork_{i=1}^{k} \psi_{i}^{-1}(a_{(i,g_{i})})+B(2\epsilon).\label{eqn:eex_dim}
\end{align}
\end{subequations}
But \eqref{eqn:eex_dim} is true for any $\epsilon$, thus it is enough to prove that
\begin{align}
\tn{dim}\left(\pitchfork_{i=1}^{k} \psi_{i}^{-1}(a_{(i,g_{i})})\right)=0.
\end{align}
Using Theorem~7.7 in \cite{topology_and_geometry} the dimension of an extended cell is given by
\begin{subequations}
\begin{align}
\notag\tn{dim}&\left(\pitchfork_{i=1}^{k} \psi_{i}^{-1}(a_{(i,g_{i})})\right)\\\notag&=\left[\left[(n-1)+(n-1)-n\right]+(n-1)-n\right]\\&+(n-1)-n\dots\\
&=k(n-1)-(k-1)n.\label{eqn:dim_ext_cell}
\end{align}
\end{subequations}
We see that if $k\neq n$ then $\tn{dim}\left(\pitchfork_{i=1}^{k} \psi_{i}^{-1}(a_{(i,g_{i})})\right)\neq0$, thus we have contradiction. We conclude that $k=n$.
\end{proof}

\section{Partitioning the State Space Using Lyapunov Functions}\label{sec:partitioning_Lyapunov}
Positive invariant sets are used in stability theory in the form of sub-level sets of Lyapunov functions. This concept is adopted in this work to synthesize partitions.

\begin{definition}
Two Lyapunov functions $\psi_{1},\psi_{2}:\mathds{R}^{n}\rightarrow\mathds{R}$ are transversal if the level sets $\psi_{1}^{-1}(a)$ and $\psi_{2}^{-1}(a)$ are transversal for any $a\in\mathds{R}\backslash\{0\}$.
\end{definition}
\begin{proposition}
Let $n>1$. For any Morse-Smale system (see Chapter~4 in \cite{Geometric_Theory_of_Dynamical_Systems_An_Introduction}) on $\mathds{R}^{n}$ there exists $n$ transversal Lyapunov functions.
\end{proposition}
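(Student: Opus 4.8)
The plan is to reduce the statement to a local construction at the finitely many critical elements of the system, glue the local data to one global energy function, and dispose of the remaining transversality with a genericity argument.

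First I would invoke the energy‑function theory for Morse--Smale systems cited in Definition~\ref{def:lyapunov_function}: such a system admits at least one Lyapunov function $\psi$ whose critical points are exactly the finitely many hyperbolic critical elements of $f$, and which is Morse (non‑degenerate) there \cite{Energy_Functions_for_Morse_Smale_Systems}. I would then note that the set of Lyapunov functions with this fixed critical set is convex: if $\psi',\psi''$ lie in it then $\dot{(\lambda\psi'+(1-\lambda)\psi'')}=\lambda\dot\psi'+(1-\lambda)\dot\psi''$ is still negative off the critical set, and the quantitative near‑critical bound of Definition~\ref{def:lyapunov_function} survives convex combinations. Hence it suffices to produce $n$ members $\psi_{1},\dots,\psi_{n}$ of this set that are pairwise transversal, i.e.\ such that whenever $x$ lies on $\psi_{i}^{-1}(a)\cap\psi_{j}^{-1}(a)$ with $a\neq0$ the gradients $\nabla\psi_{i}(x),\nabla\psi_{j}(x)$ are non‑parallel; for the stronger mutual transversality that makes extended cells $0$‑dimensional as in Proposition~\ref{nec:refinable_abstraction} one asks in addition that the gradients of any subfamily meeting at a point be linearly independent. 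The hypothesis $n>1$ is what keeps the level sets positive‑dimensional, so this is not vacuous.

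Second, near each critical element $p$ I would pass to Morse coordinates for $\psi$, in which $\psi$ is the standard quadratic form $\sum_{j}(y^{s}_{j})^{2}-\sum_{j}(y^{u}_{j})^{2}$ with index equal to the dimension of the unstable manifold of $p$ (for a periodic orbit one uses a Poincar\'e section and the Morse--Bott analogue). In these coordinates the forms $q^{(m)}_{p}(y)=\sum_{j}\mu^{(m)}_{p,j}(y^{s}_{j})^{2}-\sum_{j}\nu^{(m)}_{p,j}(y^{u}_{j})^{2}$ with all coefficients close to $1$ are again Lyapunov functions on a neighborhood of $p$ --- since $\nabla q^{(m)}_{p}-\nabla\psi$ is $O(\max_{j}|\mu^{(m)}_{p,j}-1|,|\nu^{(m)}_{p,j}-1|)\cdot\|y\|$ while $\dot\psi$ is, to leading order, a negative‑definite quadratic form in $\|y\|$ --- and, if for each pair $m\neq m'$ the coefficient matrices differ in every diagonal entry, then $\psi_{i}^{-1}(a)\pitchfork\psi_{j}^{-1}(a)$ for every $a\neq0$: a common point with parallel gradients gives, by Euler's identity for quadratic forms, \emph{equal} gradients, hence a nonzero kernel vector of an invertible diagonal matrix, a contradiction. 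I would fix $n$ such forms at each $p$ (an open dense choice of coefficients near $1$ also giving mutual transversality). Then I would interpolate between $\psi$ and these local targets with a partition of unity subordinate to the cover by small balls about the $p$'s together with their complement, rescaling the corrections $\psi-q^{(m)}_{p}$ to be $C^{1}$‑small. Each resulting $\psi_{i}$ is still a Lyapunov function: on the compact transition region its derivative $\sum\rho_{\alpha}\dot{\tilde\psi}_{i,\alpha}+\sum\dot\rho_{\alpha}(\tilde\psi_{i,\alpha}-\psi)$ (using $\sum\dot\rho_{\alpha}=0$) is a negative convex combination of the $\dot{\tilde\psi}_{i,\alpha}$, which are uniformly negative there, plus a term made arbitrarily small by shrinking the corrections; near each $p$ it equals a local form, so the critical set and the near‑critical bound are untouched and no spurious critical points appear. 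By construction the $\psi_{i}$ are (mutually) transversal on neighborhoods of the critical elements; on the remaining region, where the level‑set foliations are locally trivial, any residual tangency locus is closed and disjoint from the critical set, and it can be removed by one more round of $C^{1}$‑small, compactly supported perturbations of $\psi_{2},\dots,\psi_{n}$, transversality there being open and dense by a standard parametric‑transversality argument applied to $x\mapsto(\psi_{i}(x),\nabla\psi_{i}(x))$, carried out one pair --- or one subfamily --- at a time.

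The hard part is the globalization: upgrading the local, model‑based transversality to a property holding for \emph{all} level values $a\neq0$ simultaneously on the non‑compact $\mathds{R}^{n}$, without ever leaving the (non‑open) class of Lyapunov functions --- keeping the critical set exactly fixed and preserving the bound of Definition~\ref{def:lyapunov_function} --- and controlling the perturbations uniformly both where $\dot\psi\to0$ near the critical elements and towards infinity, so that the successive pair‑by‑pair corrections never resurrect a tangency already removed. This last point is where one needs compactness of the relevant intersection loci, i.e.\ properness of the energy function, or else a separate argument at infinity; the homogeneous quadratic situation, which the paper treats directly for linear systems, is the transparent model of the whole scheme.
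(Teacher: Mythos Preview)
Your plan is sound, but it proceeds along a genuinely different route from the paper. The paper's argument is almost entirely linear-algebraic: it identifies candidate Lyapunov functions with quadratic forms $x\mapsto x^{\tn{T}}Px$, notes that the space $S(n,\mathds{R})$ of symmetric $n\times n$ matrices has dimension $n(n+1)/2\geq n$ for $n>1$, shows that the set of $P$ for which both $P$ and $A^{\tn{T}}P+PA$ are nonsingular is open, and then simply selects $n$ linearly independent matrices $P_{1},\dots,P_{n}$ from an open subset of this set, declaring the corresponding level-set families transversal. The extension from linear systems to general Morse--Smale systems is handled in a single sentence by invoking Theorem~1 of \cite{Energy_Functions_for_Morse_Smale_Systems}. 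By contrast, you construct the $n$ functions directly: perturb a single energy function in Morse coordinates near each critical element, glue the local models with a partition of unity while controlling $\dot\psi$, and remove residual tangencies away from the critical set by a parametric-transversality argument.

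What each approach buys: the paper's route is short and exploits the affine structure of the quadratic-form space, at the cost of leaving the key implication --- that linear independence of the $P_{i}$ as matrices forces transversality of the level-set families for every regular value --- unargued, and of outsourcing the entire nonlinear Morse--Smale case to a citation. Your route is self-contained and makes explicit where transversality comes from (your Euler-identity computation is exactly the missing step in the paper's linear argument), and it treats the general case head-on; the price is length and, as you rightly flag, the delicate globalization on non-compact $\mathds{R}^{n}$, where one must keep the perturbations inside the Lyapunov class while controlling behavior both near the critical set and at infinity.
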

\begin{proof}
Let $S(n,\mathds{R})$ be a set of $n\times n$ symmetric matrices. $S(n,\mathds{R})$ is a subspace of $M(n,\mathds{R})$ of $\tn{dim}\left(S(n,\mathds{R})\right)=n(n+1)/2$. Consider the map $\psi_{A}:S(n,\mathds{R})\rightarrow S(n,\mathds{R})$ and let
\begin{align}
P\mapsto A^{\tn{T}}P+PA.
\end{align}
Now consider the map $\tn{det}:M(n,\mathds{R})\rightarrow \mathds{R}$ and let
\begin{align}
A\mapsto \tn{det}(A).
\end{align}
Then $(\tn{det}\circ\psi_{A})^{-1}(\{0\})$ is a closed set. Therefore,
\begin{align}
U_{A}\equiv\{P\in S(n,\mathds{R})|\tn{det}\circ\psi_{A}(P)\neq0\}
\end{align}
is an open set.
$V_{A}\equiv V\cap U_{A}$ is open, where
\begin{align}
V=\{P\in S(n,\mathds{R})|\tn{det}(P)\neq0\}\indent(V\tn{ is open}).
\end{align}
Let $\Theta=\{Q\in S(n,\mathds{R})|Q>0\}$ by Proposition~2.18 in \cite{Geometric_Theory_of_Dynamical_Systems_An_Introduction} the map
\begin{align}
&M(n,\mathds{R})\rightarrow C^{n}/S^{n}\tn{ defined by}\\
&L\mapsto \tn{diag}([\lambda_{1},\dots,\lambda_{n}])\tn{ is continious.}
\end{align}
Thus $\Theta$ is an open set in $S(n,\mathds{R})$.\\
We pick an open neighborhood around $Q=A^{\tn{T}}P+PA$ and denote it $U$. Then for every $Q'\in U$ there exists a (unique) $P$, thus $\psi_{A}^{-1}(U)$ is a nonempty open set in $S(n,\mathds{R})$.

We can pick $n$ linear independent matrices $P_{1},\dots,P_{n}\in\psi_{A}^{-1}(U)$. This is possible because $\psi_{A}^{-1}(U)$  is open in $S(n,\mathds{R})$ and $\tn{dim}(S(n,\mathds{R}))$ is $n(n+1)/2$. Then for any $a\in\mathds{R}\backslash\{0\}$ and $i\neq j$
\begin{align}
\{x\in\mathds{R}^{n}|x^{\tn{T}}P_{i}x=a\}\pitchfork\{x\in\mathds{R}^{n}|x^{\tn{T}}P_{j}x=a\}.
\end{align}
Extending this to Morse-Smale systems follows directly from Theorem~1 in \cite{Energy_Functions_for_Morse_Smale_Systems}.
\end{proof}

\subsection{Complete Abstraction}
A complete abstraction of \eqref{eqn:auto_differential_equation} can be obtained by constructing a partition generated by Lyapunov functions, which satisfies Proposition~\ref{prop:suf_cond_completeness}.

\begin{proposition}\label{prop:ly_tight}
Let each slice-family of $\mathcal{S}=\{\mathcal{S}_{i}|i=1,\dots,k\}$ be associated with a Lyapunov function $\psi_{i}(x)$ for the system $\Gamma$, such that
$S_{(i,j)}=\psi_{i}^{-1}([a_{(i,j-1)},a_{(i,j)}])$
and let
\begin{align}
\psi_{i}(x)=\alpha\dot{\psi}_{i}(x)\indent\forall x\in\mathds{R}^{n}.
\end{align}
Then $\mathcal{A}(\mathcal{S})$ is a complete abstraction of $\Gamma$.
\end{proposition}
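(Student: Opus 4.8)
The plan is to derive this proposition from the Sufficient Condition for Completeness, Proposition~\ref{prop:suf_cond_completeness}: I will show that the hypothesis $\psi_{i}=\alpha\dot{\psi}_{i}$ forces every trajectory of $\Gamma$ to take exactly the \emph{same} time to pass from one level set of $\psi_{i}$ to the next. Fix $i\in\{1,\dots,k\}$, an initial state $x_{0}$, and put $u(t):=\psi_{i}(\phi_{\Gamma}(t,x_{0}))$. Differentiating along the flow, and using \eqref{eqn:Lyap_der} together with \eqref{eqn:solution_of_auto_differential_equation},
\begin{align}
\dot{u}(t)=\sum_{j=1}^{n}\frac{\partial\psi_{i}}{\partial x_{j}}\bigl(\phi_{\Gamma}(t,x_{0})\bigr)f^{j}\bigl(\phi_{\Gamma}(t,x_{0})\bigr)=\dot{\psi}_{i}\bigl(\phi_{\Gamma}(t,x_{0})\bigr).
\end{align}
Since $\psi_{i}(x)=\alpha\dot{\psi}_{i}(x)$ on all of $\mathds{R}^{n}$, this becomes the scalar linear equation $\dot{u}(t)=\tfrac{1}{\alpha}u(t)$, whose solution is $u(t)=u(0)\,e^{t/\alpha}$.

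Now take $g\in\mathcal{G}$ with $g_{i}\geq2$ and $x_{0}\in\psi_{i}^{-1}(a_{(i,g_{i})})$, so $u(0)=a_{(i,g_{i})}$. Because $\dot{\psi}_{i}<0$ away from the critical point $p$, the function $\psi_{i}$ keeps a constant sign off $p$, hence $a_{(i,g_{i})}$ and $a_{(i,g_{i}-1)}$ are nonzero and of equal sign, and $\psi_{i}$ decreases strictly along the trajectory; reading off $u(t)=u(0)e^{t/\alpha}$ one finds that
\begin{align}
t_{(i,g_{i})}:=\alpha\,\ln\!\left(\frac{a_{(i,g_{i}-1)}}{a_{(i,g_{i})}}\right)
\end{align}
is the unique time at which $u(t_{(i,g_{i})})=a_{(i,g_{i}-1)}$, i.e. $\phi_{\Gamma}(t_{(i,g_{i})},x_{0})\in\psi_{i}^{-1}(a_{(i,g_{i}-1)})$ (a short sign check gives $t_{(i,g_{i})}>0$). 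The decisive observation is that $t_{(i,g_{i})}$ depends only on the two level values $a_{(i,g_{i}-1)},a_{(i,g_{i})}$, not on $x_{0}$. This is exactly requirement~2) of Proposition~\ref{prop:suf_cond_completeness}; moreover, since every trajectory crosses $S_{(i,g_{i})}$ in precisely this time, $\underline{t}_{S_{(i,g_{i})}}=\overline{t}_{S_{(i,g_{i})}}=t_{(i,g_{i})}$, which is requirement~1). Proposition~\ref{prop:suf_cond_completeness} then yields that $\mathcal{A}(\mathcal{S})$ is a complete abstraction of $\Gamma$.

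The chain-rule step and the integration of the linear ODE are routine. Two points will need genuine care. First, the well-definedness argument: I must confirm that the chosen level values $a_{(i,j)}$ all lie strictly on one side of the critical value of $\psi_{i}$, so that the logarithm is meaningful, the sign of $\alpha$ is the one making $t_{(i,g_{i})}$ positive, and the trajectory actually leaves $S_{(i,g_{i})}$ in finite time while remaining in the region $G$ of Definition~\ref{def:lyapunov_function}. Second, and this is what I expect to be the main obstacle, Proposition~\ref{prop:suf_cond_completeness} is stated for \emph{deterministic} timed automata, so I also need the Lyapunov partition to meet the connectedness criterion of Proposition~\ref{prop:deterministic_TA}; establishing, or else hypothesizing, that $e_{(g,h)}\cap\psi_{i}^{-1}(a_{(i,g_{i}-1)})$ is connected for each cell and each $i$ is the delicate geometric ingredient.
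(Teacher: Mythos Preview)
Your argument is correct and follows essentially the same route as the paper: both reduce to Proposition~\ref{prop:suf_cond_completeness} by observing that the hypothesis $\psi_{i}=\alpha\dot{\psi}_{i}$ makes the evolution of $\psi_{i}$ along any trajectory obey a scalar ODE depending only on the current level value, so the crossing time between two level sets is independent of the starting point. Your version is in fact more explicit than the paper's---you integrate $\dot{u}=u/\alpha$ to a closed-form $t_{(i,g_{i})}=\alpha\ln(a_{(i,g_{i}-1)}/a_{(i,g_{i})})$---and the caveats you flag (sign/well-definedness of the logarithm, and the determinism hypothesis needed to invoke Proposition~\ref{prop:suf_cond_completeness}) are genuine but are not addressed in the paper's own proof either.
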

\begin{proof}
Let $\psi(x)$ be a Lyapunov function for the system $\Gamma$ and let $x,x'\in\psi^{-1}(a_{m})$. According to Proposition~\ref{prop:suf_cond_completeness} the abstraction is complete if there exists a $t_{m}$, for $m=2,\dots,k$ such that
\begin{align}
\phi_{\Gamma}(t_{m},x),\phi_{\Gamma}(t_{m},x')\in\psi^{-1}(a_{m-1}).\label{eqn:lf_proof1}
\end{align}
This is true if
\begin{align}
\dot{\psi}(\phi_{\Gamma}(t,x))-\dot{\psi}(\phi_{\Gamma}(t,x'))=0\indent\forall t.\label{eqn:lf_proof2}
\end{align}
The combination of \eqref{eqn:lf_proof1} and \eqref{eqn:lf_proof2} implies that for all $c>0$ there exists an $\alpha$ such that
\begin{align}
\psi^{-1}(c)=\dot{\psi}^{-1}\left(\frac{c}{\alpha}\right)
\end{align}
hence for all $x$ there exists an $\alpha$ such that
\begin{align}
\psi(x)=\alpha\dot{\psi}(x).
\end{align}
\end{proof}
\begin{proposition}
For any hyperbolic linear system $\Gamma$ there exists $n$ transversal Lyapunov functions $\psi_{i}(x)$
each satisfying
\begin{align}
\psi_{i}(x)=\alpha\dot{\psi}_{i}(x)\indent\forall x\in\mathds{R}^{n}.
\end{align}
\end{proposition}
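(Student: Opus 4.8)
The plan is to realise each $\psi_i$ as a quadratic form $\psi_i(x)=x^{\tn T}P_ix$ whose matrix $P_i$ is an eigenvector of the Lyapunov operator $\mathcal L_A\colon S(n,\mathds R)\to S(n,\mathds R)$ given by $\mathcal L_A(P)=A^{\tn T}P+PA$. Indeed, with $f(x)=Ax$ in \eqref{eqn:Lyap_der} one gets $\dot\psi_i(x)=x^{\tn T}\mathcal L_A(P_i)x$, so the required relation $\psi_i=\alpha\dot\psi_i$ is equivalent to the algebraic identity $P_i=\alpha\,\mathcal L_A(P_i)$, i.e.\ $\mathcal L_A(P_i)=\tfrac1\alpha P_i$. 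This is the same device as in the earlier proposition for Morse-Smale systems, except that there the $P_i$ were drawn from an open subset of $S(n,\mathds R)$, whereas here they must lie in eigenspaces of $\mathcal L_A$.

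First I would work in a real eigenbasis of $A$. Since $\Gamma$ is linear and hyperbolic, $A$ is invertible with no imaginary eigenvalue; for a real eigenvalue $\mu_i\neq0$ pick $v_i$ with $A^{\tn T}v_i=\mu_i v_i$ and set $P_i=\sigma_i v_iv_i^{\tn T}$ with $\sigma_i=-\tn{sign}(\mu_i)$. Then $\mathcal L_A(P_i)=\sigma_i(A^{\tn T}v_i)v_i^{\tn T}+\sigma_i v_i(v_i^{\tn T}A)=2\mu_i P_i$, so $\psi_i(x)=\sigma_i(v_i^{\tn T}x)^2$ satisfies $\psi_i=\tfrac1{2\mu_i}\dot\psi_i$, that is $\alpha=\tfrac1{2\mu_i}$. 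The sign $\sigma_i$ makes $\dot\psi_i(x)=-2|\mu_i|(v_i^{\tn T}x)^2<0$ off the hyperplane $v_i^{\tn T}x=0$, so the sub-level sets of $\psi_i$ are positive-invariant slabs and form a slice-family $\mathcal S_i$. Linear independence of $v_1,\dots,v_n$ makes the normals $\sigma_i(v_i^{\tn T}x)v_i$ and $\sigma_j(v_j^{\tn T}x)v_j$ independent at any common point of $\psi_i^{-1}(a)$ and $\psi_j^{-1}(a)$, so the $\psi_i$ are pairwise transversal, and an eigenvalue of multiplicity $>1$ is handled by picking several squared coordinate forms with independent normals inside its eigenspace; this yields the required $n$ functions. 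Combined with Proposition~\ref{prop:ly_tight} it gives a complete abstraction, and with Proposition~\ref{nec:refinable_abstraction} a refinable one.

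The main obstacle is a non-real spectrum; the non-diagonalisable case reduces to it by passing to real Jordan form, which preserves both transversality and the identity $\psi=\alpha\dot\psi$. On a two-dimensional invariant block with eigenvalues $a\pm ib$, writing $z=y_1+iy_2$ gives $\dot z=(a+ib)z$, and every globally defined real $\psi$ with $\dot\psi\propto\psi$ turns out to depend only on $\rho^2=y_1^2+y_2^2$ (in fact $\psi=c\,\rho^{\lambda/a}$); any two of these have concentric circular level sets, which are never properly transversal at a common value of $a$. To finish, one can either (i) use distinct scalings $\psi=\rho^2$ and $\psi'=c\rho^2$ with $c\neq1$ (signs chosen so that $\dot\psi<0$, as above), whose level sets $\psi^{-1}(a)$ and $\psi'^{-1}(a)$ are disjoint for every $a\neq0$ and hence transversal in the sense of the definition, which already gives the count of $n$ functions; or (ii) enlarge the admissible class to the $\mathds R\cup\{-\infty,\infty\}$-valued functions allowed by Definition~\ref{def:lyapunov_function} and use a spiralling $\psi=\rho^2 g(\theta-\tfrac ba\log\rho)$ on each block, verifying $\dot\psi=2a\psi$ and treating transversality at the critical values of $g$ separately. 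Making route (ii) rigorous, so that the $n$ functions genuinely resolve the angular directions, is the delicate point of the proof.
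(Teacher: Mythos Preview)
Your approach is genuinely different from the paper's and, in the case of a real spectrum, cleaner: you diagonalise the Lyapunov operator $\mathcal L_A$ directly and pick rank-one eigenmatrices $P_i=\sigma_i v_iv_i^{\tn T}$, which gives the identity $\psi_i=\tfrac1{2\mu_i}\dot\psi_i$ explicitly. The paper instead never works with eigenvectors of $\mathcal L_A$ at all. It first reduces to the canonical system $\dot x_1=\lambda_1 x_1$, $\dot x_2=\lambda_2 x_2$ (with $\lambda_1<0<\lambda_2$) where \emph{every} quadratic form on either subspace satisfies $\dot\psi=2\lambda_i\psi$, so the earlier Morse--Smale argument already supplies $n$ transversal choices. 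It then transports the construction to an arbitrary hyperbolic linear system by the topological conjugacy $h$ that identifies systems of equal index; since a conjugacy preserves trajectories together with their time parametrisation, the relation $\psi=\alpha\dot\psi$ survives the pull-back $\psi\circ h$.

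This conjugacy step is exactly the idea you are missing for the complex case. You correctly diagnose the obstruction: on a $2\times 2$ block with eigenvalues $a\pm ib$, any smooth $\psi$ with $\dot\psi\propto\psi$ has circular level sets, so two such functions never meet transversally in the useful sense. Your route~(i) only satisfies the definition vacuously---$\psi^{-1}(a)$ and $(c\psi)^{-1}(a)$ are disjoint circles---and the two functions generate the \emph{same} family of annular slices, so the resulting partition cannot be refined in the angular direction and Proposition~\ref{nec:refinable_abstraction} fails; this defeats the purpose of the construction. Route~(ii) you yourself flag as not rigorous. The paper's device sidesteps all of this: the conjugacy $h$ to the real canonical form absorbs the rotation, and the pulled-back functions $\psi_i\circ h$ (smooth on $\mathds R^n\setminus\{0\}$, which is all Definition~\ref{def:lyapunov_function} requires) are automatically transversal because the $\psi_i$ were on the canonical side. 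So the gap in your argument is precisely the passage through topological conjugacy.
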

\begin{proof}
This is proved for linear systems, by constructing the complete abstraction.

Consider a linear differential equation
\begin{align}
\begin{bmatrix}\dot{x}_{1}\\\dot{x}_{2}\end{bmatrix}=
\begin{bmatrix}\lambda_{1}I_{1}&0\\0&\lambda_{2}I_{2}\end{bmatrix}\begin{bmatrix}x_{1}\\x_{2}\end{bmatrix}\label{eqn:diag_lambda_system}
\end{align}
where $I_{1}$, $I_{2}$ are identity matrices and $\lambda_{1}<0$ and $\lambda_{2}>0$.

The stable and unstable subspaces of \eqref{eqn:diag_lambda_system} are orthogonal and can be treated separately.
This system is divided into a stable space described by $x_{1}$ and an unstable space described by $x_{2}$.
For $i\in\{1,2\}$ let $\psi_{i}(x_{i})=x_{i}^{{\rm T}}P_{i}x_{i}$ be a quadratic Lyapunov function. Then its derivative is $\dot{\psi}(x_{i})=x_{i}^{{\rm T}}Q_{i}x_{i}$, where
\begin{align}
2\lambda_{i} P_{i}=Q_{i}\indent\tn{for }i=1,2.
\end{align}

This implies that any quadratic Lyapunov function satisfies Proposition~\ref{prop:ly_tight} and hence generates a complete abstraction.

Since hyperbolic linear systems are topologically conjugate if and only if they have the same index \cite{hirsch}. There is a homeomorphism $h:\mathds{R}^{n}\rightarrow\mathds{R}^{n}$ such that any hyperbolic linear system is topologically conjugate of \eqref{eqn:diag_lambda_system}, by choosing $I_{1}$ and $I_{2}$ appropriately. Note that $h$ is a diffeomorphism on $\mathds{R}^{n}\backslash\{0\}$.

This implies that there exists a complete abstraction of every hyperbolic linear system.
\end{proof}

\section{Conclusion}\label{sec:conclusion}
In this report proofs associated with a method for abstracting hyperbolic dynamical systems by timed automata have been presented. The method is based on partitioning the state space of the dynamical systems by set-differences of invariant sets.

To enable both verification and falsification of safety properties for the considered system based on the abstraction, conditions for soundness, completeness, and refinability have been set up. Furthermore, it is shown that the abstraction can be obtained as a parallel composition of multiple timed automata under certain conditions.

Finally, it is shown that there exist sound and refinable abstractions for hyperbolic Morse-Smale systems. Additionally, it is shown that there exist complete and refinable abstractions for any hyperbolic linear systems.

\bibliographystyle{IEEEtran}
\bibliography{bibliography}

\end{document}